\newcommand\R{\mathbb R}
\def\B{\{0,1\}}
\providecommand\abs[1]{\lvert#1\rvert}
\providecommand\ip[1]{\langle#1\rangle}
\newtheorem{theorem}{Theorem}
\newtheorem{claim}[theorem]{Claim}
\DeclareMathOperator\pr{\mathrm{Pr}}
\DeclareMathOperator\E{\mathrm{E}}
\def\eps{\varepsilon}
\def\cond{\mid}
\def\calD{\mathcal{D}}
\begin{document}

\title{On extracting common random bits \\ from correlated sources}
\author{Andrej Bogdanov\thanks{Department of Computer Science and Engineering and Institute for Theoretical Science and Communications, Chinese University of Hong Kong. Email: {\tt andrejb@cse.cuhk.edu.hk}. Supported by RGC GRF grant 2150617.} \and Elchanan Mossel\thanks{U.C.
Berkeley and Weizmann Institute. E-mail: {\tt mossel@stat.berkeley.edu}. Supported by NSF Career award DMS-0548249 and Israeli Science Foundation grant 1300/08. Supported by Minerva foundation with funding from the Federal German Ministry for Education and Research}}
\date{}

\maketitle
\begin{abstract}
Suppose Alice and Bob receive strings of unbiased independent but noisy bits from some random source. They wish to use their respective strings to extract a common sequence of random bits with high probability but without communicating. How many such bits can they extract? The trivial strategy of outputting the first $k$ bits yields an agreement probability of $(1 - \eps)^k < 2^{-1.44k\eps}$, where $\eps$ is the amount of noise. We show that no strategy can achieve agreement probability better than $2^{-k\eps/(1 - \eps)}$.

On the other hand, we show that when $k \geq 10 + 2 (1 - \eps) / \eps$, there exists a strategy which achieves an agreement probability of $0.003 (k\eps)^{-1/2} \cdot 2^{-k\eps/(1 - \eps)}$.
\end{abstract}

\section{Introduction}
Let $x$ and $y$ be strings in $\B^n$ generated according to the
following random process. First, each bit $x_i$ of $x$ is chosen
independently at random from $\B$. Then each bit $y_i$ of $y$ is
independently  set to equal $x_i$ with probability $1 - \eps$ and $1 -
x_i$ with probability $\eps$ (the latter possibility indicates that
$x_i$ is corrupted). Suppose that Alice and Bob now want to agree on a common random string
with probability at least, say, $1/2$. One possible protocol is for
both of them to output the first $O(1/\epsilon)$ bits of their
respective inputs. We show that no protocol can do better up to the
constant factor. On the other hand we show that this gain by a constant factor can be
achieved for certain values of the parameters.

This scenario relates to the problem of extracting a unique identification (ID) string from process variations. Several works have proposed hardware-based procedures for extracting a unique, uniformly random identifying string from a digital circuit of a given type~\cite{lim:05,SHO:08,YLHMGZ:09}. It has been proposed that such strings can be used for authentication and secret key generation of low-power devices such as RFIDs~\cite{lim:05, SD:07}.

However, such procedures are prone to noise: Different instantiations of the procedure may produce slightly different answers. Can the agreement probability in any pair of instantiations be improved algorithmically while maintaining the uniform distribution of the ID string? Our work addresses this question when the noise is random and independent across the bits.
We note that in applications, the noise can be handled using other methods, for example by incorporating noise tolerance at the receiver end.

The case where the goal of the two parties is to extract a single bit was studied independently
a number of times. It is known that in this case the optimal protocol is for the two parties
to use the first bit. See~\cite{Yang07} for references and for studying the problem of extracting one bit from two correlated sequences
with different correlation structures.

In~\cite{MO05,MORSS06} a related question is studied: If $m$
parties receive noisy versions of a common random string, where the noise
of each party is independent, what is the strategy for the $m$ parties that maximizes
the probability that the parties agree on a {\em single} random bit of output
without communicating? \cite{MO05} shows that for large $m$ using the majority functions on all
bits is superior to using a single bit and \cite{MORSS06} uses hyper contractive inequalities to show that for large $m$, majority is close to being optimal.

The optimality of the single bit protocol for two parties and extraction of one bit
implies that if the goal of the two parties is to maximize the {\em expected} number
of bits they agree on, given that they output $k$ bits, they cannot do better than
output the first $k$ bits.  However, this analysis leaves open the possibility that
there exist a strategy where the two parties may be able to agree on {\em all} the
bits with probability as large as $1 - \eps$.

We prove that this is not the case: The probability of agreement can be at most
$2^{-k\eps/(1 - \eps)}$. In the trivial strategy, where each party outputs its first
$k$ bits, the probability of agreement is $(1 - \eps)^k$. Figure~\ref{fig:upperbound}
shows the ratio between the number of bits allowed by our upper bound and the
performance of the trivial strategy, for any fixed agreement probability.

\begin{figure}
\label{fig:upperbound}
\begin{center}
\input{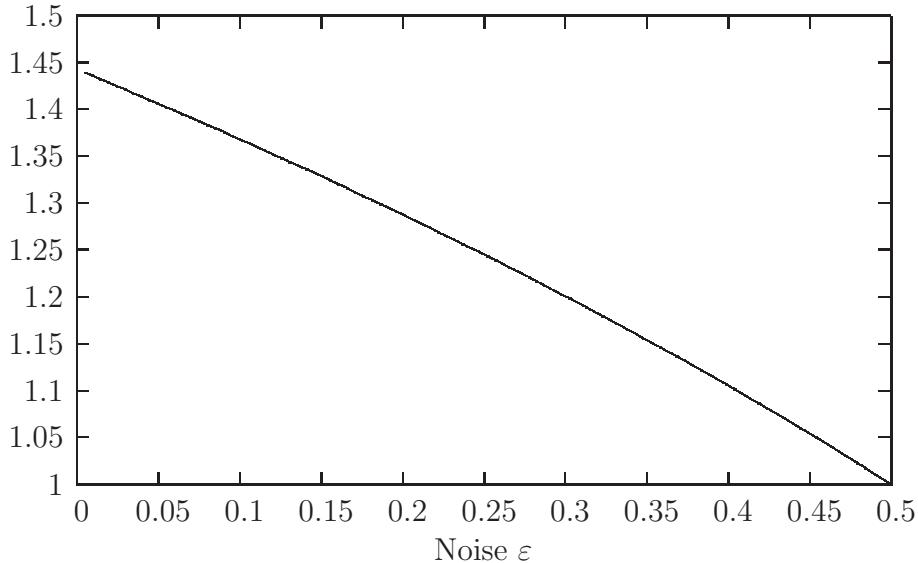}
\hspace{0.5in}
\parbox{0.75\textwidth}{\caption{An upper bound on the factor beyond which the trivial
protocol cannot be outperformed in terms of number of extracted bits (for any probability of agreement).}}
\end{center}
\end{figure}

On the other hand, when the probability of agreement is sufficiently small, an improvement
over the trivial strategy is possible: When $k \geq 10 + 2 (1 - \eps) / \eps$, there exists a protocol which achieves an agreement probability of $0.003 (k\eps)^{-1/2} \cdot 2^{-k\eps/(1 - \eps)}$.

Our protocol is asymptotically almost optimal in the following sense. Suppose we want to achieve a fixed but sufficiently small agreement probability $p$. Our upper bound shows that if the trivial protocol extracts $k$ bits, then no protocol can extract more than $(1/\ln 2)k$ bits. Our protocol can extract $(1/\ln 2 - \delta)k$ bits for any constant $\delta > 0$, as long as $\eps = \eps(\delta)$ is sufficiently small.

G\'acs and K\"orner~\cite{GK} and Witsenhausen~\cite{Wit} show that it is impossible for Alice and Bob to extract $\Omega(n)$ common random bits with probability $1 - o(1)$ for any finite distribution $(x_i, y_i)$, unless $x_i$ and $y_i$ share common randomness. Our work applies to a specific (natural) distribution $(x_i, y_i)$, but yields much sharper bounds. Maurer~\cite{Mau} and Ahlswede and Csisz\`ar~\cite{AC} consider a different model where Alice and Bob can communicate, but eavesdroppers are present and the common random string must remain secret. In this model, it is sometimes possible to achieve better agreement.

\paragraph{Notation} Throughout the paper, we use $n$ to denote the length of the correlated strings $x$ and $y$ available to Alice and Bob, $k$ for the number of bits in their output, and $\eps$ for the noise. The inputs $x = x_1\dots\/x_n$ and $y = y_1\dots\/y_n$, $x_i, y_i \in \B$ are chosen from the following distribution $(x, y)_\eps$: Each pair $x_iy_i$ is independent of all the other pairs and takes the values $00$, $11$ with probability $(1-\eps)/2$ each and the values $01$, $10$ with probability $\eps/2$ each.

\section{The upper bound}
Consider a protocol where Alice and Bob produce $k$ uniform bits of
output. Such a protocol can be described by a pair of functions $f, g:
\B^n \to \B^k$ indicating the outputs produced by Alice and Bob,
respectively.

In our problem, Alice and Bob need to agree on an input that is uniformly
random. We will consider a relaxed scenario where the outputs of Alice and Bob
do not need to be uniformly random, but sufficiently close to having
high ``entropy''. To formalize this we introduce some standard
definitions.

We recall the {\em statistical distance} between $\calD$ and $\calD'$ over sample space $\Omega$ is $\sum_{\omega \in \Omega} \abs{\pr_{\calD}(\omega) - \pr_{\calD'}(\omega)}$.
We say a distribution $\calD$ has {\em min-entropy $t$} the probability
of every element is at most $2^{-t}$. A distribution $\calD$ is {\em
$\delta$-close to min-entropy $t$} if there exists a distribution of
min-entropy $t$ which is within statistical distance
$\delta$ of $\calD$. Abusing notation, we will say that a function $f: \B^n \to
\B^k$ has ($\delta$-close to) min-entropy $t$ if the distribution
$f(x)$, where $x$ is uniform over $\B^n$, has ($\delta$-close to)
min-entropy $t$.

\begin{theorem}
\label{thm:upper}
For any two functions $f, g: \B^n \to \B^k$ that are $\delta$-close to min-entropy $t$ and every $\eps \leq 1/2$,
\[ \pr_{(x, y)_\eps}[f(x) = g(y)] < 2^{-t \epsilon / (1 - \epsilon)} + 2 \delta. \]
\end{theorem}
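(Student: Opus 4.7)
My plan is to first prove the bound $\pr[f(x)=g(y)] \leq 2^{-t\eps/(1-\eps)}$ in the ``clean'' case when $f$ and $g$ have exact min-entropy $t$, and then to handle the general $\delta$-close case by perturbing $f$ and $g$ into functions $f',g'$ with exact min-entropy $t$ that differ from the originals on at most a $\delta$ fraction of inputs each. The clean case is the technical heart of the argument.

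For the clean case I would decompose the agreement probability by output value,
\[ \pr_{(x,y)_\eps}[f(x)=g(y)] = \sum_{z \in \B^k} \pr[x \in A_z,\ y \in B_z], \]
with $A_z = f^{-1}(z)$, $B_z = g^{-1}(z)$, $\alpha_z = |A_z|/2^n$, $\beta_z = |B_z|/2^n$, and then apply the two-function Bonami--Beckner hypercontractive inequality. Since $(x,y)_\eps$ is a $\rho$-correlated pair on the hypercube with $\rho = 1-2\eps$, choosing $p = q = 2-2\eps$ (so that $(p-1)(q-1) = \rho^2$) gives
\[ \pr[x \in A_z,\ y \in B_z] \leq \|\mathbf{1}_{A_z}\|_p \|\mathbf{1}_{B_z}\|_q = (\alpha_z\beta_z)^{1/(2-2\eps)}. \]
Summing over $z$ and applying Cauchy--Schwarz reduces the task to bounding $\sum_z \alpha_z^{1/(1-\eps)}$ (and its $\beta$-analogue). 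Since min-entropy $t$ gives $\alpha_z \leq 2^{-t}$, we have
\[ \sum_z \alpha_z^{1/(1-\eps)} = \sum_z \alpha_z \cdot \alpha_z^{\eps/(1-\eps)} \leq (2^{-t})^{\eps/(1-\eps)}, \]
and multiplying the two square roots yields exactly $2^{-t\eps/(1-\eps)}$.

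For the reduction from $\delta$-close to exact min-entropy, I would identify the ``heavy'' outputs $S = \{z : \alpha_z > 2^{-t}\}$ and construct $f'$ by reassigning $(\alpha_z - 2^{-t})\cdot 2^n$ of the preimages of each heavy $z$ to non-heavy symbols. The existence of some distribution $\tilde\calD$ of min-entropy $t$ within statistical distance $\delta$ of the distribution of $f(x)$ forces $\sum_{z \in S}(\alpha_z - 2^{-t}) \leq \delta$, because $\alpha_z - 2^{-t} \leq \alpha_z - \tilde\calD(z)$ for heavy $z$ and these terms have a common sign. Provided $t \leq k$ (which is forced by the existence of $\tilde\calD$ on $\B^k$), the non-heavy symbols supply enough capacity to absorb the reassignment while keeping every new probability at most $2^{-t}$. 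Repeating the construction for $g$ and using a union bound gives
\[ \pr[f(x)=g(y)] \leq \pr[f'(x)=g'(y)] + 2\delta \leq 2^{-t\eps/(1-\eps)} + 2\delta. \]

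The most delicate point is the exponent matching: the choice $p = 2-2\eps$ in hypercontractivity is forced by $(p-1)^2 = \rho^2$, which in turn makes the Cauchy--Schwarz exponent come out to exactly $1/(1-\eps)$, so that $\alpha_z^{1/(1-\eps)} = \alpha_z \cdot \alpha_z^{\eps/(1-\eps)}$ combines with the min-entropy bound to land on $2^{-t\eps/(1-\eps)}$. Any looser pairing of exponents would leave behind a parasitic factor depending on $n$ or on the support of $f$, so the target form of the bound essentially dictates the proof; once the exponents are pinned down the remaining calculations are routine.
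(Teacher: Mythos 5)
Your proposal is correct and takes essentially the same route as the paper: the per-output bound $(\alpha_z\beta_z)^{1/(2-2\eps)}$ that you get from the two-function Bonami--Beckner inequality with $p=q=2-2\eps$ is exactly what the paper derives by combining its Claim~\ref{claim:geometric} (Cauchy--Schwarz in Fourier) with Claim~\ref{claim:boundary} (one-function hypercontractivity), and the subsequent Cauchy--Schwarz-plus-min-entropy summation is identical. Your handling of the $\delta$-close case by reassigning preimages of heavy outputs is just a more explicit version of the paper's terse claim that suitable $f', g'$ of min-entropy $t$ exist with $\pr[f \neq f'], \pr[g \neq g'] \leq \delta'$, followed by the same union bound.
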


In particular, if the output of Alice and Bob is exactly uniform then $k = t$ and $\delta = 0$, so if they both output $1/\epsilon$ common bits they cannot hope to agree with probability better than $1/2$.

To prove the theorem, we will use the following two well known claims. Claim~\ref{claim:geometric} follows from the fact that $\E_{(x, y)_\eps}[f(x)g(y)]$ is an inner product of $f$ and $g$. Claim~\ref{claim:boundary} is a corollary of the hypercontractive inequality~\cite{Bon70, Bec75} as it is used in~\cite{KKL88}. The proofs of these claims require some additional notation. We first show how they imply the theorem.

\begin{claim}
\label{claim:geometric}
For every pair of functions $f, g: \B^n \to \R$,
\[ \E_{(x, y)_\eps}[f(x)g(y)] \leq \sqrt{\E_{(x, y)_\eps}[f(x)f(y)]\E_{(x, y)_\eps}[g(x)g(y)]}. \]
\end{claim}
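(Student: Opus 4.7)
The plan is to Fourier-expand each expectation in the Walsh basis $\chi_S(x) = \prod_{i \in S}(-1)^{x_i}$ and reduce to the ordinary (weighted) Cauchy--Schwarz inequality, using the nonnegativity of the weights $(1-2\eps)^{|S|}$---equivalently, the positive semi-definiteness of the Bonami--Beckner noise operator $T_{1-2\eps}$ for $\eps \leq 1/2$.

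First I would compute $\E_{(x,y)_\eps}[\chi_S(x)\chi_T(y)]$. Since the coordinate pairs $(x_i,y_i)$ are i.i.d.\ and $\E[(-1)^{x_i+y_i}] = 1-2\eps$, this expectation factorizes over coordinates and evaluates to $(1-2\eps)^{|S|}$ when $S = T$ and $0$ otherwise. Writing $f = \sum_S \hat{f}(S)\chi_S$ and $g = \sum_T \hat{g}(T)\chi_T$ and applying linearity yields
\[ \E_{(x,y)_\eps}[f(x)g(y)] = \sum_S (1-2\eps)^{|S|}\hat{f}(S)\hat{g}(S), \]
with the analogous formulas for $\E_{(x,y)_\eps}[f(x)f(y)]$ and $\E_{(x,y)_\eps}[g(x)g(y)]$ obtained by replacing $\hat{f}(S)\hat{g}(S)$ with $\hat{f}(S)^2$ or $\hat{g}(S)^2$ respectively.

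Next I would apply the Cauchy--Schwarz inequality with the nonnegative weights $w_S := (1-2\eps)^{|S|}$:
\[ \sum_S w_S \hat{f}(S)\hat{g}(S) = \sum_S \bigl(w_S^{1/2}\hat{f}(S)\bigr)\bigl(w_S^{1/2}\hat{g}(S)\bigr) \leq \sqrt{\sum_S w_S\hat{f}(S)^2}\cdot\sqrt{\sum_S w_S\hat{g}(S)^2}. \]
Substituting the three Fourier identities from the previous paragraph into the three positions of this inequality yields exactly the stated bound.

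The only substantive point---and the reason the hypothesis $\eps \leq 1/2$ used throughout the paper is needed---is precisely the nonnegativity of the weights $w_S$ (the analogous claim fails for $\eps > 1/2$, as can be checked on $n = 1$ with $f(x) = x$, $g(x) = 1 - x$). Once this is in hand there is no real obstacle, since the Fourier computation is routine and factorizes cleanly over coordinates.
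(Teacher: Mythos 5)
Your proof is correct and is essentially the paper's own argument: both expand $f$ and $g$ in the Walsh basis, use that $\E_{(x,y)_\eps}[\chi_S(x)\chi_T(y)]$ is diagonal with entries $(1-2\eps)^{|S|}$, and conclude by Cauchy--Schwarz with these nonnegative weights (the paper writes the weight as $\rho^{2|S|}$ with $\rho=\sqrt{1-2\eps}$, which is the same use of $\eps\le 1/2$ you make explicit). Your added remark that the claim fails for $\eps>1/2$ is accurate but does not change the fact that the approach coincides with the paper's.
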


\begin{claim}
\label{claim:boundary}
For every function $h: \B^n \to \B$, $\E[h(x)h(y)] \leq \E[h(x)]^{1 / (1 - \epsilon)}$.
\end{claim}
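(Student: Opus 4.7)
The plan is to express $\E[h(x)h(y)]$ as the squared $L^2$-norm of a noise-smoothed copy of $h$, and then invoke the Bonami--Beckner hypercontractive inequality, exploiting that $h$ is Boolean-valued so that its $L^p$-norms collapse into powers of $\E[h]$.

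I would first introduce the noise operator $T_\rho$ on functions $\B^n \to \R$ defined by $(T_\rho h)(x) = \E[h(y)]$, where $y$ is obtained from $x$ by flipping each coordinate independently with probability $(1-\rho)/2$. Sampling $(x,y) \sim (x,y)_\eps$ is then the same as choosing $x$ uniformly and $y$ by this rule with $\rho = 1 - 2\eps$. Using the uniform inner product on $\B^n$, the semigroup identity $T_\rho = T_{\sqrt\rho} T_{\sqrt\rho}$, and the self-adjointness of $T_{\sqrt\rho}$, one obtains
\[ \E_{(x,y)_\eps}[h(x)h(y)] = \ip{h, T_\rho h} = \ip{T_{\sqrt\rho} h, T_{\sqrt\rho} h} = \|T_{\sqrt\rho} h\|_2^2. \]

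Next I would apply the standard ``$(2,q)$'' form of the Bonami--Beckner inequality: for every $\sigma \in [0,1]$ and every $h: \B^n \to \R$, $\|T_\sigma h\|_2 \leq \|h\|_{1 + \sigma^2}$. Plugging in $\sigma = \sqrt\rho$ gives $\|T_{\sqrt\rho} h\|_2^2 \leq \|h\|_{1+\rho}^2$. Because $h$ takes values in $\{0,1\}$, $h^p = h$ for every $p > 0$, so $\|h\|_{1+\rho}^2 = \E[h]^{2/(1+\rho)}$. Substituting $\rho = 1 - 2\eps$ yields exponent $2/(1+\rho) = 1/(1-\eps)$, exactly matching the claim.

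I do not foresee a real obstacle beyond invoking the hypercontractive inequality correctly. The only mildly delicate point is keeping the parameter conversion between the noise probability $\eps$ and the correlation $\rho = 1-2\eps$ consistent, so that the exponent $1/(1-\eps)$ comes out with no stray factor; it is also essential that we take \emph{half} the noise ($\sqrt\rho$ on each side via the semigroup property) rather than applying hypercontractivity directly to $T_\rho$, as the latter would give a weaker exponent $2/(1+\rho^2)$.
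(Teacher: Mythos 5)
Your proposal is correct and follows essentially the same route as the paper: both identify $\E_{(x,y)_\eps}[h(x)h(y)]$ with $\|T_{\sqrt{1-2\eps}}\,h\|_2^2$ (you via the semigroup property and self-adjointness, the paper via the Fourier expansion) and then apply the $(2,1+\sigma^2)$ Bonami--Beckner inequality together with the fact that $h$ is $\{0,1\}$-valued to get the exponent $1/(1-\eps)$. The parameter bookkeeping you flag is exactly the point the paper handles by setting $\rho=\sqrt{1-2\eps}$ from the start.
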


\begin{proof}[Proof of Theorem~\ref{thm:upper}]
Assume first $f$ and $g$ have min-entropy $t$.
For every $z \in \B^k$, let $f_z: \B^n \to \B$ be the function
\[ f_z(x) = \begin{cases} 1, &\text{if $f(x) = z$} \\ 0, &\text{otherwise}. \end{cases} \]
Define $g_z$ similarly.
Then $\E[f_z(x)]$ and $\E[g_z(x)]$ are upper bounded by $2^{-t}$. Therefore
\begin{align*}
\pr[f(x) = g(y)]
  &= \sum_{z \in \B^k} \pr[f(x) = z \wedge g(y) = z] \\
  &= \sum_{z \in \B^k} \E[f_z(x) g_z(y)] \\
  &\leq \sum_{z \in \B^k} \sqrt{\E[f_z(x) f_z(y)] \cdot \E[g_z(x) g_z(y)]}
    && \text{by Claim~\ref{claim:geometric}} \\
  &\leq \sum_{z \in \B^k} \sqrt{\E[f_z(x)]^{1/(1-\eps)}} \cdot \sqrt{\E[g_z(x)]^{1/(1-\eps)}}
    && \text{by Claim~\ref{claim:boundary}} \\
  &\leq \sqrt{\sum_{z \in \B^k} \E[f_z(x)]^{1/(1-\eps)}} \cdot \sqrt{\sum_{z \in \B^k} \E[g_z(x)]^{1/(1-\eps)}}
     && \text{by Cauchy-Schwarz}
\end{align*}
Since $f$ and $g$ have min-entropy $t$ it follows that $p_z = \E[f_z(x)] \leq 2^{-t}$ and similarly for $g$. We can now bound the expression
in the first square root by
\[
\sum_{z \in \B^k} p_z^{1/(1-\eps)} = \sum_{z \in \B^k} p_z \times p_z^{\eps/(1-\eps)} \leq 2^{-t\eps/(1-\eps)} \sum_{z \in \B^k} p_z = 2^{-t\eps/(1-\eps)}.
\]
By an analogous calculation for the second expression, we obtain that $\pr[f(x) = g(y)] \leq 2^{-t\eps/(1-\eps)}$.

In the case where $f$ and $g$ are $\delta$ close min entropy $t$ distributions we proceed as follows. Let $\delta' > \delta$.
Then by possibly taking a larger value of $n$, there exist $f'$ and $g'$ of min entropy $t$ such that
such that $\pr[f \neq f'] \leq \delta'$ and $\pr[g \neq g'] \leq \delta'$. Now:
\[
\pr[f(x)=g(y)] \leq \pr[f'(x)=g'(y)] + \pr[f(x) \neq f'(x)] + \pr[g(y) \neq g'(y)] \leq 2^{-t\eps/(1-\eps)} + 2 \delta'.
\]
Since $\delta' > \delta$ is arbitrary the proof follows.
\end{proof}
We now prove the two claims. For this we make use of the Fourier expansion of Boolean functions: Every function $f: \B^n \to \R$ can be uniquely written as
\[ f(x) = \sum_{S \subseteq [n]} \hat{f}_S \cdot \chi_S(x) \]
where the {\em character functions} $\chi_S$ are given by
\[ \chi_S(x) = (-1)^{\sum_{i \in S} x_i}. \]
The characters are orthonormal with respect to the inner product $\ip{f,g} = \E[f(x)g(x)]$.

It follows by a calculation that
\begin{equation}
\label{eqn:fourier}
 \E_{(x, y)_\eps}[f(x)g(y)]  = \sum_{S \subseteq [n]} \hat{f}_S \hat{g}_S \rho^{2\abs{S}}
\end{equation}
where $\rho = \sqrt{1 - 2\epsilon}$.

Therefore, to prove Claim~\ref{claim:geometric} we observe that
\begin{align*}
\E_{(x, y)_\eps}[f(x)g(y)]
  &= \sum_{S \subseteq [n]} (\hat{f}_S \rho^{\abs{S}}) \cdot (\hat{g}_S \rho^{\abs{S}}) \\
  &\leq \sqrt{\sum_{S \subseteq [n]} \hat{f}_S^2 \rho^{2\abs{S}}}
      \cdot \sqrt{\sum_{S \subseteq [n]} \hat{g}_S^2 \rho^{2\abs{S}}}
    && \text{by Cauchy-Schwarz} \\
  &= \sqrt{\E_{(x, y)_\eps}[f(x)f(y)] \E_{(x, y)_\eps}[g(x)g(y)]}.
\end{align*}

To prove Claim~\ref{claim:boundary}, we make use of the hypercontractive inequality~\cite{Bon70, Bec75}. This inequality states that for every function $f: \B^n \to \R$, we have
\begin{equation}
\label{eqn:hyper}
\E[((T_\rho f)(x))^2]^{1/2} \leq \E[f(x)^{1+\rho^2}]^{1/(1 + \rho^2)}
\end{equation}
where $T_\rho f: \B^n \to \R$ is defined via the Fourier expansion of $f$ as the function
\[ (T_\rho f)(x) = \sum_{S \subseteq [n]} \hat{f}_S \rho^{\abs{S}} \chi_S(x) \]
Comparing this with (\ref{eqn:fourier}), we have that
\[ \E_x[((T_\rho f)(x))^2] = \E_{(x, y)_\eps}[f(x)f(y)] \]
Where $\rho = \sqrt{1 - 2\epsilon}$.
Now, applying the hypercontractive inequality (\ref{eqn:hyper}) to a function $h: \B^n \to \B$ we obtain
\[ \E_{(x, y)}[h(x)h(y)]^{1/2} \leq \E[h(x)^{1+\rho^2}]^{1/(1 + \rho^2)} = \E[h(x)]^{1/(2 - 2\epsilon)} \]
which proves Claim~\ref{claim:boundary}.

\section{A better strategy}

We now show that when the agreement probability is sufficiently low, the trivial strategy can be outperformed, and in fact one can get strategies that approach the upper bound from Theorem~\ref{thm:upper} to within a constant factor.

\begin{theorem}
\label{thm:lower}
Assume $k \geq 10 + 2 (1 - \eps)/\eps$, and let $n = n(k, \eps)$ be sufficiently large. There exists a function $f\colon \B^n \to \B^k$ such that for all $z \in \B^k$ it holds that
\[
\forall z \in B^k\,\, \pr[f(x) = z] = 2^{-k}, \quad \forall z \in B^k\,\, \pr[f(x) = f(y) = z | f(x) = f(y)] = 2^{-k},
\]
\[
\pr_{(x, y)_\eps}[f(x) = f(y)] \geq 0.003 (\eps k)^{-1/2} 2^{-k\eps/(1 - \eps)}
\]
\end{theorem}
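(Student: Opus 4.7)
The plan is to identify $f$ with a partition $\{A_z : z\in\B^k\}$ of $\B^n$ into pieces of size exactly $2^{n-k}$, setting $f(x)=z$ iff $x\in A_z$. Condition~(1) then holds by construction, condition~(2) is equivalent to the requirement that $\pr[x,y\in A_z]$ be the same for every $z$, and the agreement probability decomposes as $\sum_z \pr[x,y\in A_z]$. By \claimref{claim:boundary} each summand is at most $2^{-k/(1-\eps)}$, giving a total of $2^{-k\eps/(1-\eps)}$ and matching the upper bound from \thmref{thm:upper}. My goal is then to construct the partition so that each piece's stability is within a $(k\eps)^{-1/2}$ factor of this per-piece maximum.

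For $\pr[x,y\in A_z]$ to come close to its hypercontractive bound, each $A_z$ must approximate an extremizer of the inequality behind \claimref{claim:boundary}. Such extremizers correspond, in the Central Limit scaling, to Gaussian half-spaces, which on the Boolean cube are well approximated by Hamming balls. Concretely, I would pick a set of centers $\{c_z : z\in\B^k\}\subseteq\B^n$ with a transitive translation symmetry (for example as translates under a suitable linear code), and let $A_z=\{x : |x\oplus c_z|\le r\}$ for the radius $r$ with $|A_z|=2^{n-k}$. The translation-invariance of the Bernoulli noise then makes all $A_z$ equally stable, so condition~(2) is automatic. Because equal-volume Hamming balls tile $\B^n$ exactly only for the parameters of perfect codes, for general $k$ and $\eps$ I would use a near-tiling and adjust the boundaries of each near-ball to obtain exactly the right cardinality; these local modifications affect the stability only by lower-order factors, absorbed into the final constant $0.003$.

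The main technical step is to lower-bound the stability of a single such $A_z$. Using the Krawtchouk-polynomial Fourier expansion of the indicator of a Hamming ball and applying the local CLT to the weight of $x$, the stability reduces in the scaling limit to a bivariate Gaussian tail probability $\pr[Z>t,\,Z'>t]$ in which $\Phi(-t)=2^{-k}$ and $\mathrm{Corr}(Z,Z')=1-2\eps$. Standard asymptotics for this tail integral then yield $\pr[x,y\in A_z]\ge c\cdot (k\eps)^{-1/2}\cdot 2^{-k/(1-\eps)}$ for a universal constant $c$, and summing over the $2^k$ pieces of the partition gives the theorem — with the explicit constant $0.003$ coming from a careful accounting of the CLT error and of the boundary-balancing step. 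The hardest part in my plan is the construction itself, i.e. exhibiting a family of approximate Hamming balls that form an exact equal-cardinality partition of $\B^n$; the analytic estimate in the final step, once such a partition is in hand, is a standard Gaussian-tail computation.
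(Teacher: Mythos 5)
There is a genuine gap, and it sits exactly where you flag ``the hardest part'': the existence of an exact equal-cardinality partition of $\B^n$ into approximate Hamming balls whose per-piece stability is still within a $(k\eps)^{-1/2}$ factor of the hypercontractive maximum is asserted, not proved, and the assertion you use to dismiss the difficulty is false. At the fair-share radius (normalized ball volume $2^{-k}$), the $2^k$ balls centered at the points of any code whose pairwise distances are concentrated near $n/2$ overlap massively: for a fixed center, the expected number of \emph{other} balls containing a random point of its ball is about $(2^k-1)\cdot 2^{-k}\approx 1$, so a constant fraction of each ball is multiply covered and a constant fraction of $\B^n$ is uncovered. The ``boundary adjustments'' needed to turn this near-cover into an exact partition therefore move a constant fraction of each piece's mass, not a lower-order one, and you give no argument that the surgered pieces retain stability $\geq c\,(k\eps)^{-1/2}2^{-k/(1-\eps)}$; one would also have to carry out the surgery equivariantly to keep condition (2). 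So as written the construction step does not go through, even though your analytic endgame (the bivariate Gaussian tail with correlation $1-2\eps$ and $\Phi(-t)=2^{-k}$, summed over $2^k$ pieces) is the right computation and matches the paper's estimates.

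The paper's proof avoids this obstacle with two moves you may want to compare against. First, the pieces are not balls at all: $C$ is a random affine subspace of dimension $k$ and $f$ maps $x$ to the $\prec$-closest codeword (Voronoi regions with a fixed tie-breaking order). The regions are exact translates of one another under $L$, so the uniformity conditions (1) and (2) hold exactly, with no boundary surgery (\claimref{claim:same}). Second, the lower bound on agreement never claims the Voronoi cells look like balls; it only counts pairs $(x,y)$ that are \emph{uniquely} covered by a ball of radius chosen deliberately smaller than fair share (Gaussian quantile $2^{-k-2}$ rather than $2^{-k}$), since unique coverage implies both points lie in the same cell. The pairwise near-independence of points of a random affine subspace then gives, by a union bound, that conditioned on $x,y\in B_c$ the chance some other ball also covers $x$ or $y$ is at most $1/2$, which is where the harmless factor $\frac18$ in \claimref{claim:bound} comes from. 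If you want to salvage your plan, the cleanest repair is essentially to adopt these two ideas: replace ``adjusted balls'' by the Voronoi cells of a symmetric random code, and lower-bound the stability of a cell by the stability of a strictly smaller ball it is guaranteed (with good probability) to contain.
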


The protocol has the following form. Before starting, Alice and Bob agree on a subset $C$ of $\B^n$ of size $2^k$. On input $x$ (respectively $y$), Alice (respectively Bob) finds and outputs the index of the closest point in $C$ (with an explicit rule in case of ties).
We will show that there exists a choice of $C$ for which (1) each output is generated with the same probability and (2) the probability of agreement is high.

In fact, we prove that on average, a random  subspace of $\B^n$ of dimension $k$ has both properties (1) and (2). In our analysis, we fix $k$ and the noise $\epsilon$ and let $n$ go to infinity.

Let $C$ be an affine subspace of $\B^n$. Write $C=a+L$ where $L$ is a linear subspace.
Let $\prec$ define a strict total order on $\B^n$ with the property that if the Hamming weight of $x$ is smaller than
the Hamming weight of $y$ then $x \prec y$.
We define the regions $R_c, c \in C$ by:
\[ R_c = \{x\colon \text{$x + c  \prec x + c'$ for all $c \neq c' \in C$}\} \]
Note that if $c$ is the unique closest point to $x$ among all the points in $C$ then $x \in R_c$.

Let $H : L \to \B^k$ be any invertible linear map and let $f : \B^n \to \B^k$ be defined as
$f(x) = H(c)$, where $c$ is the unique point such that $x \in R_{a+c}$.

\begin{claim}
\label{claim:same}
For all $z \in \B^k$,
\[ \pr[f(x) = z] = 2^{-k} \quad\text{and}\quad \pr[f(x) = f(y) = z | f(x) = f(y)] = 2^{-k}. \]
\end{claim}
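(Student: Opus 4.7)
The plan is to deduce both identities from a single symmetry, namely the invariance of the construction under translations by elements of the linear subspace $L$. The key observation I would establish first is that for every $d \in L$ the bijection $T_d(x) = x + d$ on $\B^n$ maps $R_c$ onto $R_{c+d}$ for every $c \in C$. This is because $L$ is closed under addition, so $c' \mapsto c' + d$ permutes $C$, and the defining inequality $x + c \prec x + c'$ for all $c' \in C \setminus \{c\}$ is equivalent, after adding $d$, to $(x+d) + (c+d) \prec (x+d) + (c'+d)$ for all $c' + d \in C \setminus \{c+d\}$. In particular, $\{R_c : c \in C\}$ partitions $\B^n$, since the strict total order $\prec$ singles out a unique minimum of $\{x + c : c \in C\}$ for every $x$.

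For the first equality, this partition consists of $|C| = 2^k$ equal-mass parts (the bijections $T_d$ carry any one part to any other), so $\pr[x \in R_c] = 2^{-k}$ for every $c \in C$. Since $H : L \to \B^k$ is a bijection, $\pr[f(x) = z] = \pr[x \in R_{a + H^{-1}(z)}] = 2^{-k}$.

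For the second equality, I would additionally observe that the joint distribution $(x,y)_\eps$ is invariant under the diagonal shift $(x,y) \mapsto (x+d, y+d)$: the $x$-marginal remains uniform and the noise pattern $x \xor y$ is unaffected. Combined with $T_d(R_c) = R_{c+d}$, this yields
\[ \pr[x \in R_c \text{ and } y \in R_c] = \pr[x \in R_{c+d} \text{ and } y \in R_{c+d}] \qquad \text{for every } d \in L, \]
so $\pr[f(x) = f(y) = z]$ does not depend on $z \in \B^k$. Since $\pr[f(x) = f(y)] = \sum_z \pr[f(x) = f(y) = z] = 2^k \cdot \pr[f(x) = f(y) = z]$, the conditional probability equals $2^{-k}$.

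The only step that requires verification is the mapping property $T_d(R_c) = R_{c+d}$, and it is pure bookkeeping on the defining inequality using closure of $L$ under addition in $\B^n$. I anticipate no substantive obstacle; everything else follows from the two invariances (of the uniform distribution and of $(x,y)_\eps$) under the translations $T_d$.
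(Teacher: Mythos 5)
Your proposal is correct and follows essentially the same route as the paper: the paper's proof likewise rests on the translation identity $f(x+d) = f(x) + H(d)$ for $d \in L$ (your $T_d(R_c) = R_{c+d}$) together with the invariance of the uniform distribution and of $(x,y)_\eps$ under the shift $(x,y) \mapsto (x+d, y+d)$. Your additional remark that $\prec$ being a strict total order makes $\{R_c\}$ a genuine partition is a worthwhile explicit check that the paper leaves implicit.
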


\begin{proof}
If $a+c,a+c' \in C$ and $x + c \in R_{a+c}$ then $x + c' \in R_{a+c'}$. So for every $c \in L$ we have $f(x + c) = f(x) + H(c)$.
Let $z,z' \in \B^k$ and let $z' = z + H(c)$ where $c \in L$. Then $(x,y)$ and $(x+c,y+c)$ have the same distribution and therefore
\[
\pr[f(x) = f(y) = z'] = \pr[f(x+c) = f(y+c) = z + H(c)] = \pr[f(x) = f(y) = z],
\]
and similarly
\[
\pr[f(x) = z'] = \pr[f(x) = z + H(c)] = \pr[f(x+c) = z] = \pr[f(x) = z],
\]
as needed.
\end{proof}

Let $t$ be chosen so that
\[
\frac{1}{\sqrt{2\pi}}\int_t^\infty e^{-z^2/2} dz = 2^{-k-2}
\]
Let $C$ be a random affine space in $\B^n$ of dimension $k$.
Let $r = n/2 + t\sqrt{n}/2$ and note that by the central limit theorem
the hamming ball of radius $r$ contains $2^{n-k-1}(1-o(1))$ points as $n \to \infty$.
We will say $x \in \B^n$ is {\em covered} by $c \in C$ (denoted by $x \in B_c$) if $x$ belongs to the ball of radius $r$ centered at $c$. We say $x$ is {\em uniquely covered} by $c$ (denoted by $x \in U_c$) if it is covered by $c$ but not by any other $c' \in C$. Observe that $U_c \subseteq R_c$.

\begin{claim}
\label{claim:bound}
Let $C$ be a random affine subspace of $\B^n$ of dimension $k$. Then for $n$ sufficiently large,
\[
\E_{C}\pr_{(x, y)_\eps}[\exists c \in C\colon x, y \in U_c] \geq \frac18 \cdot \pr[Z > \sqrt{\eps/(1 - \eps)} t]
\]
where $Z$ is a normal variable of mean $0$ and variance $1$.
\end{claim}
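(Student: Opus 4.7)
The plan is to reduce the target event to the easier event that $x$ and $y$ both lie in the Hamming ball around some single center, paying a correction for lack of uniqueness. First I would note that the $U_c$'s are disjoint and each point belongs to at most one of them, so $\{\exists c\in C\colon x, y \in U_c\} = \bigsqcup_{c \in C} \{x, y \in U_c\}$ and the target simplifies to $\E_C \sum_{c} \pr[x, y \in U_c]$.

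Set $P = \pr_{(x,y)_\eps}[x, y \in B_0]$. By translation invariance of the noise, $\pr[x, y \in B_c] = P$ for every $c$, giving a main contribution of $2^k P$. To estimate $P$, observe that $(|x|, |y|)$ is a sum of $n$ iid bivariate vectors of mean $(n/2, n/2)$ and correlation $\rho = 1 - 2\eps$, so by the bivariate CLT, the normalized pair converges to a standard bivariate normal $(\hat X, \hat Y)$ with correlation $\rho$. Writing $\hat Y = \rho \hat X + \sqrt{1 - \rho^2}\, W$ for $W$ an independent standard normal, on the event $\{\hat X \leq -t\}$ the sufficient condition $W \leq -t\sqrt{(1-\rho)/(1+\rho)} = -t\sqrt{\eps/(1-\eps)}$ already forces $\hat Y \leq -t$. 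Since $\pr[\hat X \leq -t] = 2^{-k-2}$ by the choice of $t$, I obtain $P \geq (1 - o(1)) \cdot 2^{-k-2} \cdot \pr[Z > t\sqrt{\eps/(1-\eps)}]$, so the main term satisfies $2^k P \geq (1 - o(1)) \cdot \tfrac{1}{4}\, \pr[Z > t\sqrt{\eps/(1-\eps)}]$.

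For the uniqueness correction I would use the Bonferroni bound $\mathbf{1}[x, y \in U_c] \geq \mathbf{1}[x, y \in B_c] - \sum_{c' \neq c}\bigl(\mathbf{1}[x, y \in B_c,\, x \in B_{c'}] + \mathbf{1}[x, y \in B_c,\, y \in B_{c'}]\bigr)$, which is easy to verify by cases. The heart of the argument is a pairwise-independence fact for the random affine subspace $C = a + L$: any two distinct elements $c_s = a + Vs$ and $c_{s'} = a + Vs'$ are jointly uniform and independent in $\B^n$, because $c_s - c_{s'} = V(s + s')$ is uniform (as $s + s' \neq 0$) and depends only on $V$, hence is independent of $a$ and so of $c_s$. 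Combining this with another translation-invariance check --- the pair $(z - x, z - y)$ for $z$ uniform independent of $(x,y)_\eps$ is distributed as $(x,y)_\eps$ --- each of the $\leq 2^{2k}$ ordered pairs of distinct $(c, c')$ contributes exactly $P \cdot |B_0|/2^n$ to the correction after averaging over $C$, for a total bounded by $\mu \cdot 2^k P$ with $\mu = 2^k |B_0|/2^n = \tfrac{1}{4}$. Altogether, $\E_C \sum_c \pr[x, y \in U_c] \geq 2^k P(1 - 2\mu) \geq \tfrac{1}{2}\cdot 2^k P \geq \tfrac{1}{8}\, \pr[Z > t\sqrt{\eps/(1-\eps)}]$. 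The hard part will be establishing the pairwise-independence statement cleanly and making the CLT step quantitative enough to survive with the precise constant $2^{-k-2}$; the constants are tight since $\tfrac{1}{8} = \tfrac{1}{4} \cdot \tfrac{1}{2}$ is exactly the product of the ball-density factor and the Bonferroni slack.
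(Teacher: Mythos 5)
Your proposal is correct and follows essentially the same route as the paper: decompose over the disjoint events $\{x,y\in U_c\}$, get the main term $2^k\pr[x,y\in B_0]$ by translation invariance, lower-bound $\pr[x,y\in B_0]$ via the bivariate CLT and the conditioning trick $\theta X+\sqrt{1-\theta^2}Z>t$ given $X>t$, and control uniqueness by a union/Bonferroni bound using the pairwise uniformity of points of a random affine subspace, yielding the same constants $\tfrac14\cdot\tfrac12=\tfrac18$. The only caveat is that for a uniformly random affine subspace of dimension exactly $k$ two distinct points are uniform over \emph{distinct} pairs rather than fully independent (your exact-independence argument holds in the random-generator-matrix model), a negligible $\tfrac{2^n}{2^n-1}$ correction that the paper tracks explicitly.
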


By Claims~\ref{claim:same}~and~\ref{claim:bound}, there must exist a set of points $C$ for which (1) all the regions $R_c$ and of the same size and (2) $\pr_{(x, y)_\eps}[\text{$x, y \in R_c$ for some $c \in C$}] \geq \frac18 \cdot \pr[Z > \sqrt{\eps/(1 - \eps)} t]$. To finish the proof of Theorem~\ref{thm:lower} we calculate a lower bound for the last expression.

\begin{claim}
\label{claim:num_bound}
Let $k \geq 10+2(1-\eps)/\eps$. Then
\[
\frac18 \cdot \pr[Z > \sqrt{\eps/(1 - \eps)} t] \geq 0.003 (\eps k)^{-1/2} 2^{-\eps k/(1-\eps)}
\]
where $Z$ is a normal variable of mean $0$ and variance $1$.
\end{claim}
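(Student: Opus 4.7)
The plan is to reduce the claim to standard Gaussian tail asymptotics (Mills' ratio), using the definition of $t$ to convert the tail $\pr[Z > \sqrt{\eps/(1-\eps)}\,t]$ into an expression of the form $2^{-\eps k/(1-\eps)}$ times a $1/\sqrt{\eps k}$ prefactor. Set $s = \sqrt{\eps/(1-\eps)}\,t$, so that $s^2/2 = (\eps/(1-\eps))\cdot t^2/2$; the key observation is that the Gaussian tail at $s$ behaves like the tail at $t$ raised to the power $\eps/(1-\eps)$, up to polynomial-in-$t$ corrections.

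First I would use the upper Mills bound $\pr[Z > t] \le \frac{1}{t\sqrt{2\pi}}e^{-t^2/2}$ together with the defining equation $\pr[Z > t] = 2^{-k-2}$ to obtain
\[
\frac{t^2}{2} \le (k+2)\ln 2 - \ln(t\sqrt{2\pi}),
\]
and hence
\[
e^{-s^2/2} \ge 2^{-(\eps/(1-\eps))(k+2)}\cdot (t\sqrt{2\pi})^{\eps/(1-\eps)}
 = 2^{-\eps k/(1-\eps)}\cdot 2^{-2\eps/(1-\eps)}\cdot (t\sqrt{2\pi})^{\eps/(1-\eps)}.
\]
Then I would apply the lower Mills bound $\pr[Z > s] \ge \frac{s}{(s^2+1)\sqrt{2\pi}}e^{-s^2/2} \ge \frac{1}{2s\sqrt{2\pi}}e^{-s^2/2}$ (valid once $s\ge 1$, which I would check from the hypothesis) to extract a prefactor of order $1/s$. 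Since $s = t\sqrt{\eps/(1-\eps)}$ and the upper bound on $t^2/2$ gives $t \le \sqrt{2(k+2)\ln 2}$, the $1/s$ factor is of order $\sqrt{(1-\eps)/(\eps k)}$, which supplies the required $(\eps k)^{-1/2}$ with room to spare since $(1-\eps)\le 1$.

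Combining these two steps gives
\[
\frac{1}{8}\pr[Z > s] \ge \frac{1}{16\sqrt{2\pi}}\cdot\frac{1}{s}\cdot 2^{-\eps k/(1-\eps)}\cdot 2^{-2\eps/(1-\eps)}\cdot(t\sqrt{2\pi})^{\eps/(1-\eps)}.
\]
The bulk of the work is then verifying numerically that the product of the constant $\frac{1}{16\sqrt{2\pi}}$, the $1/s$ term rewritten in terms of $\sqrt{\eps k}$, and the correction factor $2^{-2\eps/(1-\eps)}(t\sqrt{2\pi})^{\eps/(1-\eps)}$ is at least $0.003\,(\eps k)^{-1/2}$. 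The hypothesis $k \ge 10 + 2(1-\eps)/\eps$, equivalently $\eps k/(1-\eps) \ge 2 + 10\eps/(1-\eps)$, is what keeps the correction factor bounded below by an absolute constant: it ensures $s$ is bounded away from zero (so the $1/s \ge 1/(2s)$ Mills bound applies), and it allows the $2^{-2\eps/(1-\eps)}$ penalty and the helpful $(t\sqrt{2\pi})^{\eps/(1-\eps)}$ gain to be balanced uniformly in $\eps \in (0,1/2]$.

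The main obstacle is purely arithmetic: juggling the constants so that the stated $0.003$ works uniformly across the full range of $\eps$ and $k$ allowed by the hypothesis. I would split into two regimes -- $\eps/(1-\eps)$ small (where the correction factor is close to $1$ and the prefactor is the bottleneck) and $\eps/(1-\eps)$ near $1$ (where the correction factor must do real work, but $t$ grows like $\sqrt{k}$ so $(t\sqrt{2\pi})^{\eps/(1-\eps)}$ more than compensates for $2^{-2\eps/(1-\eps)}$) -- and verify each regime using the condition $k \ge 10 + 2(1-\eps)/\eps$. No subtle ideas are required beyond the Mills ratio; only careful book-keeping of constants.
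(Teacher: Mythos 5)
Your proposal is correct and follows essentially the same route as the paper's proof: Mills' ratio applied at $t$ (together with the defining equation $\pr[Z>t]=2^{-k-2}$) to get $t\le\sqrt{2(k+2)\ln 2}$ and $e^{-s^2/2}\ge 2^{-\eps(k+2)/(1-\eps)}$, then the lower Mills bound at $s=\sqrt{\eps/(1-\eps)}\,t$ with $s\ge 1$ (which, as in the paper, comes from a matching lower bound $t\ge\sqrt{k\ln 2}$ via the lower Mills bound at $t$) to extract the $(\eps k)^{-1/2}$ prefactor. The only difference is cosmetic: you retain the $(t\sqrt{2\pi})^{\eps/(1-\eps)}$ gain and plan a case split for the constant, whereas the paper simply discards that factor and verifies $0.024\cdot\frac18\ge 0.003$ in one step using $\eps\le 1/2$.
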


\begin{proof}
We will use the following estimates valid for every $y > 0$:
\[ \frac{y}{y^2 + 1} e^{-y^2/2} \leq \int_y^\infty e^{-z^2/2} dz \leq \frac{1}{y} e^{-y^2/2}. \]

Note that if $k \geq 10$ then $t \geq 3$ and therefore
\[
2^{-k-2} = \frac{1}{\sqrt{2\pi}}\int_t^\infty e^{-z^2/2} dz \leq \frac{1}{\sqrt{2\pi} t} e^{-t^2/2} \leq e^{-t^2/2}
\]
which implies that $t \leq \sqrt{2 (k+2) \ln 2} \leq \sqrt{2 k}$.
Moreover,
\[
2^{-k} \geq
2^{-k-2} = \frac{1}{\sqrt{2\pi}}\int_t^\infty e^{-z^2/2} dz \geq
\frac{1}{\sqrt{2\pi}} \frac{t}{t^2+1} e^{-t^2/2} \geq \frac{1}{\sqrt{2 \pi} 2t} e^{-t^2/2} \geq e^{-t^2},
\]
which implies that $t \geq \sqrt{k \ln 2}$. So if $k \geq 10+2(1-\eps)/\eps$, then
$t \geq \sqrt{(1-\eps)/\eps}$ and therefore
\begin{align*}
\pr[Z > \sqrt{\eps/(1 - \eps)} t]
&> \frac{1}{\sqrt{2\pi}} \cdot \frac{\sqrt{\eps/(1-\eps)} t}{(\eps / (1 - \eps))t^2 + 1} \cdot e^{-\eps t^2 / 2(1 - \eps)} \\
&\geq \frac1{\sqrt{2\pi}} \cdot \frac{1}{2 \sqrt{\eps/(1-\eps)} t} \cdot \bigl(e^{-t^2 / 2}\bigr)^{\eps/(1 - \eps)} \\
&\geq \frac1{\sqrt{2\pi}} \cdot \frac{1}{2 \sqrt{2 \eps/(1-\eps) k} } 2^{-\eps (k+2)/(1-\eps)} \\
&\geq 0.024 \cdot (\eps k)^{-1/2} 2^{-\eps k/(1-\eps)}
\end{align*}
as needed.
\end{proof}

\begin{proof}[Proof of Claim~\ref{claim:bound}]
Let $C$ be a random affine $k$-dimensional space of $\B^n$.
Such a space can be constructed by starting with a random point $c_0 \sim \B^n$, and iteratively constructing the space $C_i = c_0 + \textrm{span}(c_0 + c_1, \dots, c_0 + c_i)$, where $c_i$ is chosen uniformly from $\B^n\setminus\/C_{i-1}$. Finally let $C = C_k$. From the construction of $C$ it follows that $A C + c$ has the same distribution as $C$ for every invertible linear transformation $A$ and every vector $c$. Since for every pair of vectors $a \neq a', b \neq b'$ there exsits an invertible $A$ and a $c$ such that $A a = b$ and $A a' = b'$ it follows that $\pr_C[a, a' \in C] = \pr_C[b, b' \in C]$. Then
\begin{multline*}
\E_{C}\pr_{(x, y)_\eps}[\exists c \in C\colon x, y \in U_c] \\
\begin{aligned}
&= \E_{C} \sum_{c \in C} \pr_{(x, y)_\eps}[x, y \in U_{c}] \\
&= \E_{C} \sum_{c \in C} \pr_{(x, y)_\eps}[x, y \in B_{c}]\pr_{(x, y)_\eps}[\forall c' \neq c \colon x, y \not\in B_{c'} \mid x, y \in B_{c}] \\
&\geq \E_{C} \sum_{c \in C} \pr_{(x, y)_\eps}[x, y \in B_{c}]\Bigl(1 - \sum_{c' \neq c} \pr_{(x, y)_\eps}[\text{$x \in B_{c'}$ or $y \in B_{c'}$} \mid x, y \in B_{c}] \Bigr) \\
&= \sum_{\B^k} \E_{a \sim \B^n}\Bigl[\pr[x, y \in B_a]\Bigl(1 - \sum_{a' \neq a} \E_{a' \sim \B^n \setminus \{a\}}\pr[\text{$x \in B_{a'}$ or $y \in B_{a'}$} \mid x, y \in B_a] \Bigr)\Bigr].
\end{aligned}
\end{multline*}

The last line uses the fact that the distribution over any pair of points $c \neq c'$ in a random affine space (of dimension at least $1$) is the same as the uniform distribution over pairs $a, a' \in \B^n$ conditioned on $a' \neq a$.
For the expression in the inner summation, we have 
\begin{multline*}
\E_{a' \sim \B^n}\pr_{(x, y)_\eps}[\text{$x \in B_{a'}$ or $y \in B_{a'}$} \mid x, y \in B_a]
\\ \leq 2\E_{a' \sim \B^n}\pr_{(x, y)_\eps}[x \in B_{a'} \mid x, y \in B_a]
= 2\pr_x[x \in B_0] \leq 2^{-k-1}
\end{multline*}
and therefore
\[
\E_{a' \sim \B^n \setminus \{a\}}\pr_{(x, y)_\eps}[x, y \in B_{a'} \mid x, y \in B_a]  \leq 2^{-k-1} \frac{2^n}{2^{n}-1}.
\]
from where the desired expression equals at least
\begin{align*}
\sum_{\B^k}  \E_a\pr_{(x, y)_\eps}[x, y \in B_a] \cdot (1 - (2^k - 1) 2^{-k-1} \frac{2^n}{2^{n}-1})
&> 2^k \cdot \E_a\pr_{(x, y)_\eps}[x, y \in B_a] \cdot (1/2) \\
&= 2^{k-1} \cdot \pr_{(x, y)_\eps}[x, y \in B_0].
\end{align*}
To calculate the last expression, by the two-dimensional central limit theorem we have
\[ \pr_{(x, y)_\eps}[x, y \in B_0] \to \pr_{X, Z}[X > t, \theta X + \sqrt{1 - \theta^2}Z > t] \qquad \text{as} \qquad n \to \infty \]
where $\theta = 1 - 2\eps$ and $X, Z$ are independent normal variables with mean $0$ and variance $1$. We now lower bound this expression:
\begin{align*}
\pr_{X, Z}[X > t, \theta X + \sqrt{1 - \theta^2}Z > t]
&= \pr[X > t]\pr[\theta X + \sqrt{1 - \theta^2}Z > t \cond X > t] \\
&\geq \pr[X > t]\pr[\theta t + \sqrt{1 - \theta^2}Z > t] \\
&= \pr[X > t]\pr[Z > \sqrt{\eps/(1 - \eps)} t].
\end{align*}
Recalling that as $n \to \infty$, $\pr[X > t] \to 2^{-k-2}$, we obtain that as $n$ becomes sufficiently large,
\[ \E_{C}\pr_{(x, y)_\eps}[\text{$x, y \in U_c$ for some $c \in C$}] \geq \frac18 \cdot \pr[Z > \sqrt{\eps/(1 + \eps)} t]. \hfill\qedhere \]
\end{proof}

\section{Conclusion}
In this work we propose the following protocol for two parties that are given access to $n$ noisy random bits with noise of rate $\eps$ to agree on a common random string of length $k$:
\begin{itemize}
\item[] {\bf Preprocessing stage:}
\begin{enumerate}
\item Define a strict total order $\prec$ on $\B^n$ that is consistent with the partial order induced by Hamming weight.
\item Choose a random $k$-dimensional affine subspace $C$ of $\B^n$. Identify the elements of $C$ with strings in $\B^k$.
\end{enumerate}
\item[] {\bf Decoding stage:} On input $x$, output the unique $c \in C$ such that $x + c \prec x + c'$ for all $c' \in C$, $c' \neq c$.
\end{itemize}
Our analysis shows that on average over the choice of $C$, the outputs of Alice and Bob agree with probability $\Omega((k\eps)^{-1/2} 2^{-k\eps/(1-\eps)})$, which is best possible up to a factor of $O(\sqrt{k\eps})$ provided that $k \geq 2/\eps + O(1)$ and $n = n(k, \eps)$ is sufficiently large.

We remark that an explicit upper bound on $n$ in terms of $k$ and $\eps$ can in principle be obtained by using a quantitative version of the central limit theorem in our arguments.

We leave open the question of designing a deterministic and more efficient protocol for the problem considered here. It may also be interesting to investigate how much common randomness can be extracted from other noisy channels $(x, y)$.

\subsection*{Acknowledgments}
We are grateful to Philip Leong for explaining the problem of circuit unique ID extraction which served as the original motivation for the paper and to Uri Feige for crucial insights used in our construction.

\bibliographystyle{alpha}
\bibliography{idrefs}
\end{document}